

\documentclass[a4paper,12pt]{article}
\usepackage{amsmath}
\usepackage{amsthm}
\usepackage{amsfonts}
\usepackage{mathrsfs}
\usepackage{graphicx}
\usepackage{hyperref}

\newtheorem{theorem}{Theorem}
\newtheorem{lemma}{Lemma}

\textwidth  = 16truecm \textheight = 24truecm

\hoffset = -1truecm \voffset = -2truecm
\numberwithin{equation}{section}

\hypersetup{ pdftitle={Draft},
pdfauthor={Fiki T. Akbar, Bobby E. Gunara}, 
pdfkeywords={Local Esistence, Dyonic Black Hole}, 
bookmarksnumbered, pdfstartview={FitH}, urlcolor=blue, }


\begin{document}
\pagestyle{plain}




\title{\LARGE\textbf{Existence  of Static Dyonic Black Holes in $4d$ $N = 1$ Supergravity With Finite Energy}}

\author{{Fiki T. Akbar$^{\sharp}$ and Bobby E. Gunara}$^{\flat,\sharp}$ \\ \\
$^{\flat}$\textit{\small Indonesian Center for Theoretical and
Mathematical Physics (ICTMP)}
\\ {\small and} \\
$^{\sharp}$\textit{\small Theoretical Physics Laboratory}\\
\textit{\small Theoretical High Energy Physics and Instrumentation Research Group,}\\
\textit{\small Faculty of Mathematics and Natural Sciences,}\\
\textit{\small Institut Teknologi Bandung}\\
\textit{\small Jl. Ganesha no. 10 Bandung, Indonesia, 40132}\\
\small email: ftakbar@fi.itb.ac.id, bobby@fi.itb.ac.id}

\date{}

\maketitle




\begin{abstract}

We prove the existence and the uniqueness of the static dyonic
black holes in four dimensional $N=1$ supergravity theory coupled
vector and scalar multiplets.
 We set the near-horizon geometry to be a product of two Einstein surfaces, whereas the asymptotic geometry has to be a space of constant scalar curvature.
 Using these data, we show that there exist a unique solution for scalar fields which interpolates these regions.

\end{abstract}




\section{Introduction}
\label{sec:intro}

Four dimensional static spacetimes, which are  closely related to
the spherical symmetric black holes, have been intensely studied
because it may be a simple model and useful to test an extension
theory of Einstein's general relativity.  For example,  a
supersymmetric extension of Einstein's general relativity called
$N=1$ supergravity is of interest to consider since in the rigid
limit it may provide a unification theory.\\
\indent Recently, some authors considered a class of extremal
spherical symmetric black hole solutions of four dimensional $N=1$
supergravity coupled to vector and chiral multiplets with the
scalar potential turned on \cite{Gunara:2013IJMPA}. These black
holes are non-supersymmetric. The boundaries of the black holes
are set to be spaces of constant scalar curvature. Near the
horizon, the geometry becomes a product space formed by two
Einstein surfaces, namely two-anti de Sitter surface ($AdS_2$) and
two-sphere $S^2$. While, in the asymptotic region the geometry
turns to a space of constant scalar curvature which is
not Einstein.\\
\indent At the boundaries  the complex scalars are frozen and
regular. If the value of the scalars in both regions coincides,
then both  asymptotic and near-horizon geometries may correspond
to each other. This can be achieved if the Arnowitt-Deser-Misner (ADM) mass is extremized
which implies  that  the scalar charges vanish \cite{Gunara:2013IJMPA}.\\
\indent In this paper we prove locally and globally the existence
of static dyonic black holes of $N=1$ local supersymmetry
(supergravity) with general coupling of vector and scalar
multiplets in four dimensions. These black holes are generally
non-extremal and non-supersymmetric. We use the same setting for
the geometry of the boundaries of the black holes as the previous
case mentioned above. Our result here also covers the  non-supersymmetric case of $N=1$ theory \cite{Andrianopoli:2007} and could be applied to the  supersymmetric black holes in $N=2$ theory \cite{Ferrara:1995ih, Ferrara:1996prd}.\\
\indent To prove the local existence and the uniqueness, we use a
contraction mapping theorem to the complex scalar field equations
of motions.
 By defining an integral operator, the existence and the uniqueness  can be achieved by showing
 that the non-linear parts satisfy the local Lipshitz condition. This can be done by taking several assumptions in the following:  First, both K\"ahler potential and Christoffel symbol are bounded above by $U(n)$ symmetric K\"ahler potential and its Christoffel symbol,
respectively. These estimates can be used to eliminate the quantity associated with K\"ahler metric in our analysis. Second, the potentials, namely the scalar potential and the black hole potential have to be at least $C^2$-functions and their derivative
is locally Lipshitz function. Finally, the
spacetime metric is at least a $C^2$-function.\\
\indent The second step is to prove the global existence by
showing that the energy functional is bounded above a positive
constant between the boundaries. We also take that all functions
namely the scalar fields, the potentials, and the spacetime metric
have to be at least a $C^2$-function. After some computations, we
find that the scalar potential should be vanished in the
asymptotic region.\\
\indent The organization of this paper can be mentioned as
follows. In section \ref{sec:sugra} we review shortly $N = 1$
supergravity coupled with vector and chiral multiplets in four
dimensions. We describe some aspects of static spacetimes whose
boundaries are spaces of constant scalar curvature in section
\ref{sec:setup}. In section \ref{sec:exist} we prove the local
existence of radius dependent scalar fields which follows that
such static black holes do exist in any compact interval. Finally,
we employ an energy functional analysis to show the solutions
could be globally defined in section \ref{sec:energy}.

\section{$N = 1$ Supergravity Coupled with Vector and Chiral Multiplets}
\label{sec:sugra}

In this section, we give a short description about four
dimensional $N = 1$ supergravity coupled with vector and chiral
multiplets. Here, we only write some useful terms  for our
analysis in the paper. For an excellent review, interested reader
can further consult, for example,
\cite{D'Auria:2001kv,Andrianopoli:2001zh}.

The theory consists of a gravitational multiplet  $(e_{\mu}^\Lambda, \psi_\mu)$, $n_C$ chiral multiplet $(\phi^{i},\chi^{i})$ and $n_V$ vector multiplets $(A^{a}_{\mu}, \lambda^{a})$ where Latin alphabets $a,b =1,...,n_v$, and $i,j=1,...,n_c$ show the number of multiplets and the Greek alphabets $\mu,\nu =0,...,3$ and $\Lambda,\Sigma = 0,...,3$ show the spacetime and tangent space index respectively. Here $e_{\mu}^\Lambda$, $A^{a}_{\mu}$, and $\phi$ are a vierbein, a
gauge field, and a complex scalar, respectively, while $\psi_\mu$,
$\lambda^{a}$, and $\chi^{a}$ are the fermion fields.

Furthermore, the complex scalar fields span a $2n_c$-dimensional Hodge-K\"ahler manifold $\mathcal{M}^{2n_c}$ endowed with metric $g_{i\bar{j}} = \partial_{i}\partial_{\bar{j}}\:K(\phi,\bar{\phi})$ and a $U(1)$ connection is defined by $Q  \equiv  - \left( K_i \,d\phi^i - K_{\bar{j}}\, d\bar{\phi}^{\bar{j}}\right)$, where $K(\phi,\bar{\phi})$ is a real function called K\"ahler potential. Moreover, there exists a set of holomorphic functions, $(f_{ab}(\phi),W(\phi))$, which are the gauge couplings and the superpotential respectively. The bosonic sector of Lagrangian is given by,
\begin{eqnarray}
\frac{1}{\sqrt{-g}} {\mathcal{L}} & = & - \frac{1}{2} R + h_{ab}\,{\mathcal{F}}_{\mu\nu}^{a}{\mathcal{F}}^{b\mu\nu}
+  k_{ab}\,
{\mathcal{F}}_{\mu\nu}^{a}\widetilde{\mathcal{F}}^{b\mu\nu}
\nonumber\\
 && + \, g_{i\bar{j}}(\phi^{i}, \bar{\phi}^{\bar{j}})\,
\partial_{\mu} \phi^{i} \,
\partial^{\mu}\bar{\phi}^{\bar{j}} - V(\phi,\bar{\phi})\;, \label{LagrangianSupergravity}
\end{eqnarray}
where $V_{S}$ is the real function called the scalar potential which given by,
\begin{equation}
V(\phi,\bar{\phi}) = e^{ K}\left(g^{i\bar{j}} \, \nabla_i W\,
 \bar{\nabla}_{\bar{j}} \bar{W}
 - 3  W \bar{W} \right)  \;. \label{ScalarPotential}
\end{equation}
The quantity $R$ is the Ricci scalar of
four dimensional spacetime, whereas
${\mathcal{F}}_{\mu\nu}^{a}$ is an Abelian field strength of
$A_{\mu}^{a}$, and
$\widetilde{{\mathcal{F}}}_{\mu\nu}^{a}$ is a Hodge dual of
${\mathcal{F}}_{\mu\nu}^{a}$. The function $h_{ab}$ and $k_{ab}$ are the real and imaginary part of $f_{ab}$ respectively and $\nabla_i W\equiv
\partial_i W +  K_i W$ is a covariant derivative in internal manifold, $\mathcal{M}^{2n_c}$. It is worth to mention that Lagrangian (\ref{LagrangianSupergravity}) is invariant under a set of supersymmetry transformations of the fields discussed in \cite{D'Auria:2001kv,Andrianopoli:2001zh}.

The bosonic equations of motions can be obtained by varying the Lagrangian (\ref{LagrangianSupergravity}) with respect to $g_{\mu\nu}$, $A^{a}_{\mu}$, and $\phi^{i}$, and by setting all the fermions are vanish at the level of equation of motion. The first equation is namely the Einstein field equation,
\begin{eqnarray}
 R_{\mu\nu} - \frac{1}{2} g_{\mu\nu} R  &=& g_{i\bar{j}} \,
 (\partial_{\mu}\phi^i \partial_{\nu}\bar{\phi}^{\bar{j}}
 + \partial_{\nu}\phi^i \partial_{\mu}\bar{\phi}^{\bar{j}}) - g_{i\bar{j}} \, g_{\mu\nu}
\, \partial_{\rho}\phi^i \partial^{\rho}\bar{\phi}^{\bar{j}} \nonumber\\
&& + \, 4 {h}_{ab}
\,{\mathcal{F}}_{\mu\rho}^{a}{\mathcal{F}}^{b}_{\nu\sigma}
g^{\rho\sigma} - g_{\mu\nu} {h}_{ab}
\,{\mathcal{F}}_{\rho\sigma}^{a}{\mathcal{F}}^{b|\rho\sigma}
 + g_{\mu\nu} V \;. \label{Einsteineq}
\end{eqnarray}
The second equation is the equation for gauge fields,
\begin{equation}
 \partial_{\nu} \left(\varepsilon^{\mu\nu\rho\sigma} \sqrt{-g}
 \, {\mathcal{G}}_{a|\rho\sigma}\right) = 0\;,
\label{SugragaugeEOM}
\end{equation}
where
\begin{equation}
{\mathcal{G}}_{a|\rho\sigma}  \equiv {k}_{ab} {\mathcal{F}}^{b}_{\rho\sigma} -
{h}_{ab}
\widetilde{\mathcal{F}}^{b}_{\rho\sigma} \;,
\label{magnetfield}
\end{equation}
is an electric field strengths tensor. Finally, we have the scalar field equations,
\begin{equation}
 \frac{g_{i\bar{j}}}{\sqrt{-g}} \, \partial_{\mu} \left( \sqrt{-g} \,g^{\mu\nu}
 \partial_{\nu}\bar{\phi}^{\bar{j}}
 \right) + \bar{\partial}_{\bar{k}} g_{i\bar{j}} \, \partial_{\nu}\bar{\phi}^{\bar{j}}
 \partial^{\nu}\bar{\phi}^{\bar{k}} =  \partial_i {h}_{ab}\,{\mathcal{F}}_{\mu\nu}^{a}{\mathcal{F}}^{b|\mu\nu}
-  \partial_i{k}_{ab}\,
{\mathcal{F}}_{\mu\nu}^{a}\widetilde{\mathcal{F}}^{b|\mu\nu}
- \partial_i V \;, \label{SugrascalarEOM}
\end{equation}
with $g \equiv {\mathrm{det}}(g_{\mu\nu})$. In addition, we have a Bianchi identity
\begin{equation}
 \partial_{\nu} \left(\varepsilon^{\mu\nu\rho\sigma} \sqrt{-g}
 \, {\mathcal{F}}^{a}_{\rho\sigma}\right) = 0\;,
\label{BianchiId}
\end{equation}
comes from the definition of  ${\mathcal{F}}^{a}_{\rho\sigma}$.

\section{General Setting: Static Spacetimes}
\label{sec:setup}

 In this section we set a scenario  as follows.
First of all, we consider the following ansatz for spacetime
metric,
\begin{equation}
ds^2 = -e^{A(r)}\, dt^2 + e^{B(r)}\, dr^2 + e^{C(r)}\, (d\theta^2
+ {\mathrm{sin}}^2\theta \, d\phi^2) \ , \label{metricans}
\end{equation}
which is static and has a spherical symmetry. The functions $A(r)$, $B(r)$, and $C(r)$ are a smooth real functions.
 Among the functions $A(r)$, $B(r)$, and $C(r)$, only two of them are independent since one can redefine the radial coordinate $r$ to absorb one of them.\\
\indent The solution for the gauge field
equations,(\ref{SugragaugeEOM}), and the Bianchi identity,
(\ref{BianchiId}), can be obtained by simply taken the case where
the only non zero component of the field strength tensor are
${\mathcal{F}}^{a}_{01}(r)$ are ${\mathcal{F}}^{a}_{23}(\theta)$.
The solutions are,
\begin{eqnarray}
{\mathcal{F}}^{a}_{01} &=&
\frac{1}{2} \, e^{\frac{1}{2}(A + B)-C}\,
({h}^{-1})^{ab}({k}_{bc}\,
g^{c} - q_{b}) \;,\nonumber\\
{\mathcal{F}}^{a}_{23} &=& - \frac{1}{2} \, g^{a} \,
{\mathrm{sin}}\theta \;, \label{solgaugeEOM}
\end{eqnarray}
with $q_{a}$ and $g^{a}$ are electric and magnetic charges,
respectively \cite{Belluci:2008prd}.
Then, by assuming that the scalar field is function of radial
coordinate only, the scalar field equations,
(\ref{SugrascalarEOM}), can be written as
\begin{equation}
 g_{i\bar{j}} \, \bar{\phi}^{\bar{j}}{''} + \bar{\partial}_{\bar{k}}
 g_{i\bar{j}} \, \bar{\phi}^{\bar{j}}{'} \bar{\phi}^{\bar{k}}{'}
 + \frac{1}{2} \left(A' - B' + 2C' \right)g_{i\bar{j}} \,
 \bar{\phi}^{\bar{j}}{'} = e^{B} \left( e^{-2C} \partial_i V_{\mathrm{BH}}
 + \partial_i V \right) \;,\label{scalarEOM1}
\end{equation}
with $\nu' \equiv d\nu / dr$ denotes the derivative respect to radial coordinate. The function $V_{\mathrm{BH}}$ is called the black hole potential \cite{Ferrara:1996prd, Ferrara:1997npb} and it has the form
\begin{eqnarray}
V_{\mathrm{BH}} \equiv - \frac{1}{2} \, (g \:\, q
) \; {\mathcal{M}} \; \left(\begin{array}{c}  g \\
 q \end{array} \right)
 \;,
 \label{VBH}
\end{eqnarray}
where
\begin{eqnarray}
{\mathcal{M}} = \left(\begin{array}{cc} {h} +
{k} \,{h}^{-1} \,{k} & -
{k}\, {h}^{-1} \\
- {h}^{-1}\, {k} & {h}^{-1}
\end{array} \right) \;. \label{McalBH}
\end{eqnarray}

Now we can construct a class of solutions of equations (\ref{Einsteineq}) and (\ref{scalarEOM1}) for particular regions, namely near asymptotic and near horizon regions discussed in \cite{Gunara:2013IJMPA}. Around asymptotic region, the scalar fields are frozen, namely
\begin{eqnarray}
\lim_{r \to\infty}\phi^i{'}  &=&  0
 \;,\nonumber\\
\lim_{r \to \infty} \phi^i &=& \phi^i_0  \ , \label{inftycon}
\end{eqnarray}
 which implies that both the black hole potential and the scalar potential become constant. The black hole solutions can be written as,
\begin{equation}
ds^{2} = -\Lambda(r) dt^{2} + \Lambda^{-1}(r)dr^{2} + r^{2}
(d\theta^2
+ {\mathrm{sin}}^2\theta \, d\phi^2) \ , \label{metricsol}
\end{equation}
with
\begin{equation}
\Lambda(r) \equiv 1 + \frac{2\eta}{r} +
\frac{V_{\mathrm{BH}}^{0}}{r^2} - \frac{1}{3}V_{0}r^2\:,
\label{Ar}
\end{equation}
and
\begin{equation}
V_{\mathrm{BH}}^{0} \equiv
V_{\mathrm{BH}}(\phi_{0},\bar{\phi}_{0})\:, \qquad V_{0} \equiv
V(\phi_{0},\bar{\phi}_{0})\ .
\end{equation}
 Since $\Lambda(r)$ is a positive definite
function, the geometries of the black hole solution
(\ref{metricsol}) has a constant scalar curvature which are
neither Einstein nor symmetric space. Furthermore,  the scalar
fields can be written down as,
\begin{eqnarray}
\phi^i{'}  &=&  - \frac{\Sigma^i}{r^2} + \left( P{'}(r)
   \left(g^{i\bar{j}}\, \bar{\partial}_{\bar{j}} V_{\mathrm{BH}}\right)_{\phi_0}
 + Q{'}(r) \left(g^{i\bar{j}} \,\bar{\partial}_{\bar{j}} V\right)_{\phi_0} \right)
 \;,\nonumber\\
\phi^i &=& \phi^i_0 + \frac{\Sigma^i}{r} + \left( P(r)
   \left(g^{i\bar{j}}\,  \bar{\partial}_{\bar{j}} V_{\mathrm{BH}}\right)_{\phi_0}
 + Q(r) \left(g^{i\bar{j}}\, \bar{\partial}_{\bar{j}} V\right)_{\phi_0}
 \right)\;, \label{solscalarEOM2}
\end{eqnarray}
where $\Sigma^i$ are the scalar charges introduced by  \cite{Gibbons:1996prl}. The functions $P(r)$ and $Q(r)$ are
\begin{eqnarray}
P(r) &=& \frac{1}{r}\int \left( \int \frac{A^{-1}}{r^3} \, dr \right) dr \;, \nonumber\\
Q(r) &=& \int \left(\int r \, A^{-1}  \, dr \right) dr \;. \label{PQ}
\label{PQtil}
\end{eqnarray}
In order to evade the singularity in (\ref{PQ}),  the frozen scalars should be critical points of both the black hole and scalar potentials, namely
\begin{eqnarray}
\left( \partial_i V_{\mathrm{BH}}\right)_{\phi_0} &=& 0 \;, \nonumber\\
\left( \partial_i V\right)_{\phi_0}  &=& 0 \;. \label{extremcon1}
\end{eqnarray}

On the other hand, in the near-horizon limit the scalar fields are also frozen with respect to radial coordinates
\begin{eqnarray}
\lim_{r \to r_h} \phi^i & \to & \phi^i_h \ , \nonumber \\
\lim_{r \to r_h} \phi^i{'} & \to & 0 \ , \label{extremcon2}
\end{eqnarray}
and in general $\phi^i_h \ne \phi^i_0$ which can further be viewed as the critical points of the so-called
effective black hole potential $V_{\mathrm{eff}}$ which is a function of both the black hole potential $V_{\mathrm{BH}}$ and the scalar potential $V$
\begin{equation}
 V_{\mathrm{eff}}  \equiv   \frac{1-\sqrt{1-4 V_{\mathrm{BH}}V}}{2
 V}\ .\label{effV}
\end{equation}
Here, the black hole geometries are the product of a two
dimensional surface $M^{1,1}$ and the two-sphere $S^2$ with radius
$r_h \equiv r_h(g, q)$. It is worth mentioning that the near
horizon solution of the scalar fields equation (\ref{scalarEOM1})
has the form
\begin{eqnarray}
  \bar{\phi}^{\bar{j}}{''}
& = & \frac{\ell^{-1}}{(r - r_h)^2} \left( g^{i\bar{j}} \frac{\partial V_{\mathrm{eff}}}{\partial \phi^i} (p_h) \right) \ ,\nonumber\\
  \bar{\phi}^{\bar{j}}
& = &\bar{\phi}^{\bar{j}} _h - \ell^{-1} {\mathrm{ln}} | r - r_h |
\left( g^{i\bar{j}} \frac{\partial V_{\mathrm{eff}}}{\partial \phi^i} (p_h) \right) \:,\label{solscalarEOM3}
\end{eqnarray}
where $p_h \equiv (\phi_h, \bar{\phi}_h)$ and
\begin{equation}
\ell^{-1} = \frac{V^h_{\mathrm{eff}}}{\sqrt{1 -4
V^h_{\mathrm{BH}} V_h}}
 \;, \label{solfieldEOMhor}
\end{equation}
is a constant which shows that $\phi^i_h$ are indeed the critical
points of $V_{\mathrm{eff}}$ in order to have a regular solution
at the region. The Killing vector analysis at the horizon gives us
an evidence that $M^{1,1}$ must be $AdS_2$ \cite{KLR}.

\section{Local Existence}
\label{sec:exist}

In this section we prove the local existence of non-trivial radius dependent solutions of the scalar equations of motions (\ref{scalarEOM1}). This class of solutions interpolates between the two regions, namely the horizon and asymptotic regions.

Let $\mathcal{M}^{2n_c}$ be a $2n_{c}-$dimensional K\"ahler manifold spanned by scalar fields $(\phi^{i},\bar{\phi}^{\bar{i}})$ with K\"ahler potential $K=K(\phi,\bar{\phi})$. In this paper, we consider the case where the K\"ahler potential of $\mathcal{M}^{2n_c}$  is bounded above by a $U(n_{c})$ symmetric K\"ahler potential and satisfies several conditions,
\begin{eqnarray}
K \leq \Phi(|\phi|) \: ,\label{KahlerPotential}\\
\left|\Gamma\right| \leq |\tilde{\Gamma}| \: ,\label{ChristoffelSymbol}
\end{eqnarray}
where $|\phi| =
\left(\delta_{i\bar{j}}\phi^{i}\bar{\phi}^{\bar{j}}\right)^{\frac{1}{2}}$ and $\tilde{\Gamma}$ is the Christoffel symbol of $\tilde{g}$. Then we have the following lemma \cite{AGTZ},
\begin{lemma}
\label{LemmaSIM}
Let $\mathcal{M}^{2n_c}$ be a K\"ahler manifold with Kahler potential $K=K(\phi,\bar{\phi})$. If $\mathcal{M}^{2n_c}$  satisfies (\ref{KahlerPotential}), (\ref{ChristoffelSymbol}) and
\begin{equation}
\left|\frac{{\mathcal F}'}{2|\phi|}\right|  \leq  \epsilon \ ,
\label{ConditionKahler}
\end{equation}
where ${\mathcal F}(|\phi|) \equiv  \frac{1}{4|\phi|^{2}}\left(\Phi''-\frac{\Phi'}{|\phi|}\right)$
with $\Phi' = \partial \Phi/\partial |\phi|$ and $\epsilon$ is a non negative constant, then we have the following estimates
\begin{eqnarray}
\left|K\right| & \leq  & \frac{\epsilon}{6} \left|\phi\right|^{6} +
\frac{C_{1}}{2}\left|\phi\right|^{4} + C_{2}\left|\phi\right|^{2} +
C_{3} \ , \\
\left|\Gamma\right| & \leq & 2\epsilon |\phi|^{3} + C_{1}|\phi| \: .
\end{eqnarray}
\end{lemma}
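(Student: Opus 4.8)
\noindent To prove Lemma~\ref{LemmaSIM} I would reduce everything to the one-variable radial profile of the $U(n_c)$-symmetric comparison data. Changing variables from $|\phi|$ to $u \equiv |\phi|^{2}=\delta_{i\bar{\jmath}}\phi^{i}\bar{\phi}^{\bar{\jmath}}$ and writing the majorant as $\Phi(|\phi|)=\Psi(u)$, the relation $\Phi'=2|\phi|\,\dot{\Psi}$ (a dot denoting $d/du$) gives $\Phi''-\Phi'/|\phi|=4u\,\ddot{\Psi}$, hence
\begin{equation}
{\mathcal F}(|\phi|)=\ddot{\Psi}(u)\;,\qquad \frac{{\mathcal F}'}{2|\phi|}=\dddot{\Psi}(u)\;,
\end{equation}
so that hypothesis (\ref{ConditionKahler}) is precisely the pointwise bound $|\dddot{\Psi}(u)|\le\epsilon$. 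In the variable $u$ the symmetric metric and its inverse are
\begin{equation}
\tilde{g}_{i\bar{\jmath}}=\dot{\Psi}\,\delta_{i\bar{\jmath}}+\ddot{\Psi}\,\bar{\phi}^{\bar{\imath}}\phi^{j}\;,\qquad \tilde{g}^{i\bar{\jmath}}=\frac{1}{\dot{\Psi}}\,\delta^{i\bar{\jmath}}-\frac{\ddot{\Psi}}{\dot{\Psi}\,(\dot{\Psi}+u\,\ddot{\Psi})}\,\phi^{i}\bar{\phi}^{\bar{\jmath}}\;,
\end{equation}
with $\dot{\Psi}>0$ and $\dot{\Psi}+u\,\ddot{\Psi}>0$ by positivity of $\tilde{g}$.

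For the estimate on $|K|$ I would Taylor-expand $\Psi$ about $u=0$ with integral remainder and bound the remainder using $|\dddot{\Psi}|\le\epsilon$, which yields
\begin{equation}
|\Psi(u)|\le\frac{\epsilon}{6}\,u^{3}+\frac{|\ddot{\Psi}(0)|}{2}\,u^{2}+|\dot{\Psi}(0)|\,u+|\Psi(0)|\;.
\end{equation}
With $u=|\phi|^{2}$ and $C_{1}=|\ddot{\Psi}(0)|=|{\mathcal F}(0)|$, $C_{2}=|\dot{\Psi}(0)|$, $C_{3}=|\Psi(0)|$ this is exactly the claimed right-hand side; combining it with (\ref{KahlerPotential}), which bounds $K$ above by $\Psi(u)$ (and, together with its lower counterpart, bounds $|K|$), completes this part.

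For the estimate on $|\Gamma|$ I would compute the Christoffel symbol $\tilde{\Gamma}^{i}_{jk}=\tilde{g}^{i\bar{l}}\,\partial_{j}\tilde{g}_{k\bar{l}}$. Since $U(n_c)$-invariance makes $|\tilde{\Gamma}|$ a function of $u$ only, it is enough to evaluate at $\phi=(|\phi|,0,\dots,0)$, which collapses the computation to the radial profile: there $\tilde{\Gamma}$ is a combination, with bounded numerical coefficients, of the elementary tensors $\delta^{i}_{j}\bar{\phi}^{\bar{k}}$, $\delta^{i}_{k}\bar{\phi}^{\bar{\jmath}}$, $\phi^{i}\bar{\phi}^{\bar{\jmath}}\bar{\phi}^{\bar{k}}$ times the scalars $\ddot{\Psi}/\dot{\Psi}$ and $(2\ddot{\Psi}+u\,\dddot{\Psi})/(\dot{\Psi}+u\,\ddot{\Psi})$. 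I would bound each numerator by integrating $|\dddot{\Psi}|\le\epsilon$ (so that $|\ddot{\Psi}(u)|\le\epsilon u+C_{1}$) and each denominator from below by the positivity/normalization of $\tilde{g}$; the elementary tensors contribute one factor of $|\phi|$, so that after absorbing numerical constants one obtains $|\tilde{\Gamma}|\le 2\epsilon|\phi|^{3}+C_{1}|\phi|$, whence the claim follows from (\ref{ChristoffelSymbol}).

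The two Taylor estimates are routine; the substantive step is the Christoffel computation, and the delicate point there is keeping the inverse-metric factors $1/\dot{\Psi}$ and $1/(\dot{\Psi}+u\,\ddot{\Psi})$ under control so that they do not spoil the polynomial bound. This is exactly the role of the domination hypotheses (\ref{KahlerPotential})--(\ref{ChristoffelSymbol}): they allow one to discard the true K\"ahler metric in favour of its $U(n_c)$-symmetric majorant, whose inverse is manageable. A secondary subtlety is supplying the lower bound on $K$ that upgrades the one-sided (\ref{KahlerPotential}) to the two-sided estimate on $|K|$.
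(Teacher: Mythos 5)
You should note at the outset that the paper contains no proof of Lemma~\ref{LemmaSIM} to compare against: it is quoted with a citation to \cite{AGTZ}, so your argument can only be judged on its own terms. Your radial reduction is certainly the intended route. With $u=|\phi|^{2}$ and $\Phi=\Psi(u)$ one has $\Phi'=2|\phi|\dot{\Psi}$ and $\Phi''-\Phi'/|\phi|=4u\ddot{\Psi}$, so indeed ${\mathcal F}=\ddot{\Psi}$ and ${\mathcal F}'/(2|\phi|)=\dddot{\Psi}$, and (\ref{ConditionKahler}) becomes $|\dddot{\Psi}|\le\epsilon$. The Taylor expansion with integral remainder then gives the stated bound on $\Psi$ with $C_{1}=|\ddot{\Psi}(0)|=|{\mathcal F}(0)|$, $C_{2}=|\dot{\Psi}(0)|$, $C_{3}=|\Psi(0)|$; this part is correct. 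The fact that (\ref{KahlerPotential}) is only a one-sided bound, so that it cannot by itself yield an estimate on $|K|$, is a defect of the lemma as stated rather than of your argument, and you flag it appropriately.

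The genuine gap is in the Christoffel estimate. You pass through the second-kind symbols $\tilde{\Gamma}^{i}_{jk}=\tilde{g}^{i\bar{l}}\partial_{j}\tilde{g}_{k\bar{l}}$ and arrive at the scalars $\ddot{\Psi}/\dot{\Psi}$ and $(2\ddot{\Psi}+u\dddot{\Psi})/(\dot{\Psi}+u\ddot{\Psi})$, proposing to bound the denominators ``from below by the positivity/normalization of $\tilde{g}$.'' Positivity of $\tilde{g}$ gives only $\dot{\Psi}>0$ and $\dot{\Psi}+u\ddot{\Psi}>0$ pointwise; it provides no uniform lower bound, and none of the hypotheses (\ref{KahlerPotential})--(\ref{ConditionKahler}) supplies one ($\dot{\Psi}$ may decay to zero at large $u$ while $|\dddot{\Psi}|\le\epsilon$ still holds). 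Along your route the polynomial bound therefore does not follow. The form of the conclusion --- no denominators, and the same constant $C_{1}=|{\mathcal F}(0)|$ as in the $K$ estimate --- indicates that the object being bounded is the first-kind symbol
\begin{equation}
\partial_{j}\tilde{g}_{k\bar{l}}={\mathcal F}\left(\bar{\phi}_{j}\delta_{k\bar{l}}+\bar{\phi}_{k}\delta_{j\bar{l}}\right)+\frac{{\mathcal F}'}{2|\phi|}\,\bar{\phi}_{j}\bar{\phi}_{k}\phi_{\bar{l}}\;,
\end{equation}
for which one gets, using $|{\mathcal F}(u)|\le C_{1}+\epsilon|\phi|^{2}$ (from integrating $|\dddot{\Psi}|\le\epsilon$ once) and (\ref{ConditionKahler}) for the cubic term, a bound of the form $2\epsilon|\phi|^{3}+C_{1}|\phi|$ with no inverse-metric factors at all (the precise numerical coefficients depend on the norm convention for $\Gamma$, which neither the lemma nor your proposal fixes). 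Replacing your second-kind computation by this direct estimate, and then invoking (\ref{ChristoffelSymbol}) to transfer the bound to $\Gamma$, closes the argument.
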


Let $\phi : I\subset \mathbb{R} \rightarrow {\mathcal{M}}^{2n_c}$ is a curve in ${\mathcal{M}}^{2n_c}$ satisfying differential equations,
\begin{equation}
\bar{\phi}^{\bar{i}}{''} + F(r) \bar{\phi}^{\bar{i}}{'} = - \bar{\Gamma}^{\bar{i}}_{\bar{j}\bar{k}}\bar{\phi}^{\bar{j}}{'}\bar{\phi}^{\bar{k}}{'}  + g^{k\bar{i}}\left(G(r)\partial_{k}V_{BH} + H(r)\partial_{k}V\right) \:, \label{EoM}
\end{equation}
where $F(r) \equiv \frac{1}{2}\left(A'-B'-2C'\right)$, $G(r)
\equiv e^{B-2C}$ and $H(r) \equiv e^{B}$. We assume that the
functions $F(r),G(r)$ and $H(r)$ are at least $C^2$- real
functions. Introducing the new fields,
\begin{equation}
\pi^{i} = \phi^{i}{'}\:, \qquad \bar{\pi}^{\bar{i}} =  \bar{\phi}^{\bar{i}}{'}\:,
\end{equation}
then we can write equation (\ref{EoM}) as a first order equations,
\begin{eqnarray}
\bar{\phi}^{\bar{i}}{'} & = & \bar{\pi}^{\bar{i}}  \:,\nonumber\\
\bar{\pi}^{\bar{i}}{'} + F(r) \bar{\pi}^{\bar{i}} & = & - \bar{\Gamma}^{\bar{i}}_{\bar{j}\bar{k}}\bar{\pi}^{\bar{j}}\bar{\pi}^{\bar{k}}  + g^{k\bar{i}}\left(G(r)\partial_{k}V_{BH} + H(r)\partial_{k}V\right) \:, \label{EoMTransform}
\end{eqnarray}
together with their complex conjugate.

Let $\mathbf{u} = \left(z^{i},\pi^{i}\right)$, then we can write equation (\ref{EoMTransform}) as
\begin{equation}
\frac{d\bar{\mathbf{u}}}{dr} =  \mathcal{J}(\mathbf{u},\bar{\mathbf{u}},r) \:, \label{EoMFinal}
\end{equation}
with
\begin{equation}
\mathcal{J}(u,\bar{u},r) = \left[\begin{array}{c}
    \bar{\pi}^{\bar{i}}\\
    -F(r) \bar{\pi}^{\bar{i}} - \bar{\Gamma}^{\bar{i}}_{\bar{j}\bar{k}}\bar{\pi}^{\bar{j}}\bar{\pi}^{\bar{k}}  + g^{k\bar{i}}\left(G(r)\partial_{k}V_{BH} + H(r)\partial_{k}V\right)
 \end{array}\right] \: . \label{OpJSugra}
\end{equation}
The local existence and uniqueness of the equation
(\ref{EoMFinal}) is established using contraction mapping theorem by
showing the operator $\mathcal{J}$ is a locally Lipshitz.  Let $J
\equiv \left[r_{h},r_{h}+\delta\right] \subset I$ be a real
interval with $\delta$ is a small real constant and $U \subset
T{\mathcal{M}}^{2n_c}$ is an open set. We prove the following
lemma,
\begin{lemma}
Let $\mathcal{J}$ be an operator defined by (\ref{OpJSugra}). If the potentials $V$ and $V_{BH}$ are at least a $C^{2}$ function and satisfying,
\begin{eqnarray}
|\partial_{k}V_{BH}(\tilde{\phi}) - \partial_{k}V_{BH}(\phi)| & \leq & C_{4}|\tilde{\phi}-\phi| \:, \nonumber \\
|\partial_{k}V(\tilde{\phi}) - \partial_{k}V(\phi)| & \leq & C_{5}|\tilde{\phi}-\phi| \:, \label{potentialCondition}
\end{eqnarray}
then the operator $\mathcal{J}$ is a locally Lipshitz with respect to $u$.
\end{lemma}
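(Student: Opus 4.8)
The plan is to show that the vector‑valued map $\mathcal{J}(u,\bar u,r)$ defined in \eqref{OpJSugra} is Lipschitz in $u$ on the compact set $J\times \bar U$, where $\bar U$ is a closed bounded neighbourhood in $T\mathcal{M}^{2n_c}$, by decomposing $\mathcal{J}$ into its constituent pieces and estimating each separately. Since the first component of $\mathcal{J}$ is simply $\bar\pi^{\bar i}$, it is trivially Lipschitz in $u$ with constant $1$. So the work is entirely in the second component, which I would split as
\begin{equation}
\mathcal{J}_2 = \underbrace{-F(r)\,\bar\pi^{\bar i}}_{(\mathrm{I})} \;\underbrace{-\,\bar\Gamma^{\bar i}_{\bar j\bar k}\bar\pi^{\bar j}\bar\pi^{\bar k}}_{(\mathrm{II})} \;+\;\underbrace{g^{k\bar i}\bigl(G(r)\partial_k V_{BH} + H(r)\partial_k V\bigr)}_{(\mathrm{III})}\ .
\end{equation}
For (I): since $F$ is continuous on the compact interval $J$ it is bounded, $|F(r)|\le M_F$, so $|F(r)(\bar{\tilde\pi}-\bar\pi)|\le M_F|\tilde u - u|$. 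For (II): the difference $\bar\Gamma\tilde\pi\tilde\pi - \bar\Gamma\pi\pi$ should be written as $\bar\Gamma(\tilde\pi+\pi)(\tilde\pi-\pi)$ plus a term $(\tilde{\bar\Gamma}-\bar\Gamma)\tilde\pi\tilde\pi$ coming from the $z$‑dependence of the Christoffel symbols; here one invokes Lemma~1 to bound $|\bar\Gamma|$ by $2\epsilon|\phi|^3 + C_1|\phi|$, which is bounded on $\bar U$, and uses $C^1$‑dependence of $\Gamma$ on $\phi$ (again controlled through the $U(n_c)$‑symmetric model and $\tilde\Gamma$) to get the local Lipschitz bound on the quadratic term.

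The term (III) is where the hypothesis \eqref{potentialCondition} enters. I would write
\begin{align}
&G(r)\bigl(\tilde g^{k\bar i}\partial_k V_{BH}(\tilde\phi) - g^{k\bar i}\partial_k V_{BH}(\phi)\bigr) \nonumber\\
&\quad= G(r)\,\tilde g^{k\bar i}\bigl(\partial_k V_{BH}(\tilde\phi) - \partial_k V_{BH}(\phi)\bigr) + G(r)\,\partial_k V_{BH}(\phi)\bigl(\tilde g^{k\bar i} - g^{k\bar i}\bigr)\ ,
\end{align}
and similarly for the $V$‑term. In the first summand one uses boundedness of $G$ on $J$, boundedness of $|g^{k\bar i}|$ on $\bar U$ (this is the point at which the estimates on $K$ from Lemma~1 are needed — they let one eliminate the inverse K\"ahler metric from the analysis by comparison with the $U(n_c)$‑symmetric metric, whose inverse is explicit), and the Lipschitz hypothesis \eqref{potentialCondition} on $\partial_k V_{BH}$. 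In the second summand one uses that $\partial_k V_{BH}$ is continuous hence bounded on $\bar U$ (since $V_{BH}\in C^2$) and that $g^{k\bar i}$ is $C^1$ in $\phi$, hence locally Lipschitz on $\bar U$. Assembling (I)+(II)+(III), every piece is bounded by a constant times $|\tilde u - u|$, so adding the constants gives a single Lipschitz constant $L = L(J,\bar U)$ and hence $|\mathcal{J}(\tilde u,\bar{\tilde u},r) - \mathcal{J}(u,\bar u,r)| \le L\,|\tilde u - u|$ for all $r\in J$.

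The main obstacle I anticipate is the careful treatment of the $\phi$‑dependence of the K\"ahler metric and the Christoffel symbols: the hypotheses \eqref{potentialCondition} only control the potentials, not $g^{k\bar i}$ or $\bar\Gamma^{\bar i}_{\bar j\bar k}$, so one must verify that the bounds of Lemma~1 together with the assumed comparison \eqref{KahlerPotential}–\eqref{ChristoffelSymbol} with the $U(n_c)$‑symmetric model genuinely deliver both uniform boundedness and local Lipschitz continuity of these geometric quantities on $\bar U$; in particular one needs the K\"ahler metric to be bounded away from degeneracy on $\bar U$ so that $g^{k\bar i}$ stays bounded, which implicitly requires choosing $\bar U$ small enough (equivalently, restricting $\delta$) around the reference point. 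A secondary technical point is keeping the complex‑conjugate equations consistent throughout, so that the real Lipschitz estimate is genuinely on the full vector $\mathbf{u}=(z^i,\pi^i)$ and its conjugate treated as independent real coordinates; once these are in place the contraction‑mapping argument for local existence and uniqueness of \eqref{EoMFinal} follows in the standard way.
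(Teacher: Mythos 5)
Your proposal follows essentially the same route as the paper's proof: the same term-by-term decomposition of $\mathcal{J}$, boundedness of $F$, $G$, $H$ on the compact interval $J$, Lemma \ref{LemmaSIM} to control the Christoffel symbols, and the hypothesis (\ref{potentialCondition}) for the potential terms, assembled into a single local Lipschitz constant. The only substantive difference is that you are more careful than the paper about the inverse K\"ahler metric $g^{k\bar{i}}$ multiplying the potential gradients --- the paper's estimates silently drop this factor --- whereas you correctly flag that its boundedness and Lipschitz continuity on the neighbourhood must be extracted from the comparison with the $U(n_c)$-symmetric model and a non-degeneracy assumption.
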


\begin{proof}

From definition of the operator $\mathcal{J}$ in equation (\ref{OpJSugra}), we have the following estimate,
\begin{eqnarray}
|\mathcal{J}(r,\mathbf{u}(r))|_{U} & \leq & |\pi(r)| + |F(r)||\pi(r)| + \left(|\Gamma(\phi(r),\bar{\phi}(r))|+1\right)\:|\pi(r)|^{2} \nonumber \\
& & \quad + |G(r)|\:|\partial_{k}V_{BH}| + |H(r)|\:|\partial_{k}V|\:. \label{estimateJ1}
\end{eqnarray}
Since $\mathcal{M}^{n_{c}} $ satisfying lemma \ref{LemmaSIM}, then using equation (\ref{KahlerPotential}), we can write (\ref{estimateJ1}) as,
\begin{eqnarray}
|\mathcal{J}(r,\mathbf{u}(r))|_{U} &\leq& \left(|F(r)|+1\right)|\pi(r)| + \left(2\epsilon | \phi(r) |^{3} + C_{1} | \phi(r) | + 1\right)\:|\pi(r)|^{2} \nonumber \\
& & + |G(r)|\:|\partial_{k}V_{BH}| + |H(r)|\:|\partial_{k}V|\:. \label{estimateJ2}
\end{eqnarray}
Since $F(r), G(r), H(r)$ are at least a $C^{2}$ real function, then their value are bounded on any closed interval.
Hence, $|\mathcal{J}(r,\mathbf{u}(r))|_{U}$ is bounded on $J$.

Furthermore, for $\mathbf{u},\tilde{\mathbf{u}}\in U$, we have the following estimate,
\begin{eqnarray}
|\mathcal{J}(r,\tilde{\mathbf{u}}(r))-\mathcal{J}(r,\mathbf{u}(r))|_{U} & \leq & \left(|F(r)|+1\right)|\tilde{\phi}(r)-\phi(r)| + |\pi(r)|^{2}|\tilde{\Gamma}-\Gamma| \nonumber\\
& & + \left(|\Gamma'||\tilde{\pi}(r)+\pi(r)|+1\right)|\tilde{\pi}(r)-\pi(r)|  \nonumber \\
& & + |G(r)||\partial_{k}V_{BH}(\tilde{\phi}(r)) - \partial_{k}V_{BH}(\phi(r))|  \nonumber \\
& & + |H(r)||\partial_{k}V(\tilde{\phi}(r)) - \partial_{k}V(\phi(r))|. \label{JJ}
\end{eqnarray}
Using equation (\ref{KahlerPotential}) and (\ref{potentialCondition}), we can write the equation (\ref{JJ}) as
\begin{eqnarray}
|\mathcal{J}(r,\tilde{\mathbf{u}}(r))-\mathcal{J}(r,\mathbf{u}(r))|_{U} & \leq & \left\{|F(r)| + C_{4}|G(r)|+C_{5}|H(r)| + |\pi(r)|^{2} \right. \nonumber \\
& & \left.\left(2\epsilon |\tilde{\phi}(r)||\tilde{\phi}(r)+\phi(r)| + |\phi(r)|^{2} + C_{1}\right)\right\}|\tilde{\phi}(r)-\phi(r)| \nonumber \\
& & + \left\{\left(2\epsilon |\tilde{\phi}(r)|^{3} + C_{1}|\tilde{\phi}(r)|\right)|\tilde{\pi}(r) + \pi(r)| + 1\right\} \nonumber \\
& &|\tilde{\pi}(r) - \pi(r)|.
\end{eqnarray}
Hence, we have
\begin{equation}
|\mathcal{J}(r,\tilde{\mathbf{u}})-\mathcal{J}(r,\mathbf{u})|_{U} \leq C(|\tilde{\mathbf{u}}|,|\mathbf{u}|)\: |\tilde{\mathbf{u}}-\mathbf{u}|. \label{LipshitzJ}
\end{equation}
Equation (\ref{LipshitzJ}) proves that $\mathcal{J}$ is a locally Lipshitz with respect to $\mathbf{u}$.

\end{proof}

Write equation (\ref{EoMFinal}) in form of integral equation,
\begin{equation}
\bar{\mathbf{u}}(r) = \bar{\mathbf{u}}(r_{h}) + \int_{r_{h}}^{r}\:\mathcal{J}\left(\mathbf{u}(s),\bar{\mathbf{u}}(s),s\right)\:ds \:. \label{IntegralEquation}
\end{equation}
Let
\begin{equation}
X = \{ u \in C(J,T\mathcal{M}^{n_{c}}) : \: \mathbf{u}(r_{h}) = \mathbf{u}_{0}, \: \sup_{r\in J}|\mathbf{u}(r)|\leq M \} \:,
\end{equation}
equipped with the norm,
\begin{equation}
\|\mathbf{u}\|_{X} = \sup_{r\in J}\:|\mathbf{u}(r)|.
\end{equation}
Introducing an operator $\mathcal{K}$ which defined as follow,
\begin{equation}
\mathcal{K}(\mathbf{u}(r)) = \mathbf{u}_{0} + \int_{r_h}^{r}\:ds \mathcal{J}\left(s,\mathbf{u}(s)\right)\:. \label{OpKdefinition}
\end{equation}
Based on equation (\ref{IntegralEquation}), if $\mathbf{u}(r)$ is
a solution of the differential equations (\ref{EoMFinal}), then
$\mathbf{u}(r)$ is a fixed point of operator $\mathcal{K}$. The
existence and the uniqueness of the fixed point of the operator is
guaranteed by the contraction mapping principle.

In the following lemma, we prove that the operator $\mathcal{K}$ is a mapping from $X$ to itself and it is a contraction mapping.
\begin{lemma}
Let $\mathcal{K}$ be an operator defined by equation (\ref{OpKdefinition}). There is a positive constant $\delta$ such that $\mathcal{K}$ is a mapping from $X$ to itself and $\mathcal{K}$ is a contraction mapping on $J = [r_{h},r_{h}+\delta]$.
\end{lemma}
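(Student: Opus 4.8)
The plan is to verify the two standard hypotheses of the Banach fixed point theorem for $\mathcal{K}$ on the complete metric space $(X,\|\cdot\|_X)$: first that $\mathcal{K}$ maps $X$ into itself for $\delta$ small enough, and second that $\mathcal{K}$ is a contraction for $\delta$ possibly even smaller. Completeness of $X$ follows because $X$ is a closed subset of the Banach space $C(J,T\mathcal{M}^{2n_c})$ with the sup norm (the constraints $\mathbf{u}(r_h)=\mathbf{u}_0$ and $\sup_{r\in J}|\mathbf{u}(r)|\le M$ are both preserved under uniform limits), so this is routine and I would state it in one line.

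For the self-mapping property, I would start from the definition (\ref{OpKdefinition}) and estimate, for $\mathbf{u}\in X$ and $r\in J$,
\begin{equation}
|\mathcal{K}(\mathbf{u}(r)) - \mathbf{u}_0| \leq \int_{r_h}^{r} |\mathcal{J}(s,\mathbf{u}(s))|_U \, ds \leq (r - r_h) \sup_{s\in J}|\mathcal{J}(s,\mathbf{u}(s))|_U \leq \delta \, M_1 \,,
\end{equation}
where $M_1$ is the bound on $|\mathcal{J}|_U$ over $J$ and over the ball $|\mathbf{u}|\le M$ established in the second lemma above (using estimate (\ref{estimateJ2}) together with the fact that $F,G,H$ and the potential derivatives are bounded on the compact interval $J$ and on the bounded set $|\phi|\le M$). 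Hence $\sup_{r\in J}|\mathcal{K}(\mathbf{u}(r))| \le |\mathbf{u}_0| + \delta M_1$, so choosing $\delta \le (M - |\mathbf{u}_0|)/M_1$ gives $\|\mathcal{K}(\mathbf{u})\|_X \le M$, while the constraint $\mathcal{K}(\mathbf{u})(r_h)=\mathbf{u}_0$ is immediate from (\ref{OpKdefinition}). Thus $\mathcal{K}(X)\subset X$.

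For the contraction property, take $\mathbf{u},\tilde{\mathbf{u}}\in X$ and estimate
\begin{equation}
|\mathcal{K}(\tilde{\mathbf{u}}(r)) - \mathcal{K}(\mathbf{u}(r))| \leq \int_{r_h}^{r} |\mathcal{J}(s,\tilde{\mathbf{u}}(s)) - \mathcal{J}(s,\mathbf{u}(s))|_U \, ds \leq \int_{r_h}^{r} C(M,M)\,|\tilde{\mathbf{u}}(s) - \mathbf{u}(s)| \, ds \,,
\end{equation}
where I invoke the local Lipschitz estimate (\ref{LipshitzJ}) from the preceding lemma, bounding the coefficient $C(|\tilde{\mathbf{u}}|,|\mathbf{u}|)$ by a constant $L := C(M,M)$ uniformly over $X$ (this uses that all the factors appearing in $C$ — the sup-norm bounds on $F,G,H$, the constants $C_1,C_4,C_5$, $\epsilon$, and powers of $|\phi|$, $|\pi|$ — are controlled by $M$ and the compactness of $J$). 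Taking the supremum over $r\in J$ yields $\|\mathcal{K}(\tilde{\mathbf{u}}) - \mathcal{K}(\mathbf{u})\|_X \le \delta L \,\|\tilde{\mathbf{u}} - \mathbf{u}\|_X$, and shrinking $\delta$ further so that $\delta L < 1$ (i.e. $\delta < \min\{(M-|\mathbf{u}_0|)/M_1,\; 1/L\}$) makes $\mathcal{K}$ a contraction on $J=[r_h,r_h+\delta]$, completing the proof.

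The only genuine subtlety — the ``hard part'' — is making sure the Lipschitz constant in (\ref{LipshitzJ}) and the bound $M_1$ can be taken \emph{uniform} over all of $X$ rather than merely locally; this is where the hypotheses matter, since it relies crucially on the a priori bound $|\mathbf{u}|\le M$ built into $X$ (which tames the powers of $|\phi|$ and $|\pi|$ coming from the quadratic Christoffel term $\bar\Gamma^{\bar i}_{\bar j\bar k}\bar\pi^{\bar j}\bar\pi^{\bar k}$ via Lemma \ref{LemmaSIM}), on the continuity of $F,G,H$ over the compact $J$, and on the global Lipschitz-type bounds (\ref{potentialCondition}) on $\partial_k V_{BH}$ and $\partial_k V$. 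Once those uniform constants $M_1$ and $L$ are in hand, the choice of $\delta$ is mechanical. Everything else is the standard Picard--Banach argument and I would not belabor it.
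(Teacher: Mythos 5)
Your proposal is correct and follows essentially the same route as the paper: bound $|\mathcal{J}|$ uniformly on $X$ to get the self-mapping for small $\delta$, then use the Lipschitz estimate (\ref{LipshitzJ}) with constant frozen at $M$ to get the contraction, and invoke the Banach fixed point theorem. If anything, your explicit choice $\delta \le (M-|\mathbf{u}_0|)/M_1$ is slightly cleaner than the paper's, whose bound $\|\mathbf{u}_0\|_X + \delta(\|\mathcal{J}(r_h)\|+MC_M) \le \|\mathbf{u}_0\|_X + 1$ only lands in $X$ under the implicit assumption $M \ge \|\mathbf{u}_0\|_X + 1$.
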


\begin{proof}
From definition of the operator $\mathcal{K}$ in equation (\ref{OpKdefinition}), we have the following estimate,
\begin{eqnarray}
\|\mathcal{K}(\mathbf{u}) \|_{X} & \leq & \|\mathbf{u}_{0}\|_{X} + \sup_{r\in J}  \int_{r_h}^{r}\:ds\:|\mathcal{J}\left(s,\mathbf{u}(s)\right)|\nonumber\\
              & \leq & \|\mathbf{u}_{0}\|_{X} + \sup_{r\in J} \left(|\mathcal{J}(r_h)| + C_{M}|\mathbf{u}|\right) \left(r-r_h\right)\nonumber\\
                     & \leq & \|\mathbf{u}_{0}\|_{X} + \delta\left(\|\mathcal{J}(r_h)\| + M C_{M}\right)\:.
\end{eqnarray}
If we choose,
\begin{equation}
\delta \leq \min\left(\frac{1}{C_{M}},\frac{1}{C_{M}M + \|\mathcal{J}(r_h)\|}\right)\:, \label{TimeSugra}
\end{equation}
then $\mathcal{K}(\mathbf{u})$ is a mapping from $X$ to itself.

Furthermore, for $\mathbf{u},\tilde{\mathbf{u}}\in X$ we have,
\begin{eqnarray}
\|\mathcal{K}(\tilde{\mathbf{u}}) - \mathcal{K}(\mathbf{u})\|_{X} & \leq & \sup_{r\in J}
 \int_{r_h}^{r}\:ds\left| \mathcal{J}\left(s,\tilde{\mathbf{u}}(s)\right) - \mathcal{J}\left(s,\mathbf{u}(s)\right)\right|\nonumber\\
& \leq & \sup_{r\in J} \sup_{0\leq s \leq r} r \left| \mathcal{J}\left(s,\tilde{\mathbf{u}}(s)\right) - \mathcal{J}\left(s,\mathbf{u}(s)\right)\right| \nonumber\\
& \leq & C_{M}\delta\|\tilde{\mathbf{u}} - \mathbf{u}\|_{X}\:.
\end{eqnarray}
Since $\delta$ satisfies the inequality (\ref{TimeSugra}), then the operator $\mathcal{K}$ is a contraction mapping.

\end{proof}

Then, by contraction mapping theorem, the operator $\mathcal{K}$ admits a unique fixed point.
Hence, for each initial value, there exist a unique local solution of the differential equation (\ref{EoMFinal}).

\section{Finite Energy Conditions}
\label{sec:energy}

In order to have global solutions in $J_{\infty}\equiv [r_h,
+\infty)$, we need to consider finite energy conditions which
ensure the existence of such solutions. This aspect is discussed
in this section as follows.\\
\indent Let us first turn to the component of the tensor energy
momentum
\begin{equation}
T_{00} = e^{A-B} g_{i\bar{j}} \phi^i{'}  {\bar{\phi}^{\bar{j}} }{'} + e^{A-2C} V_{\mathrm{BH}} + e^A V \ ,
\end{equation}
whose energy functional is given by
\begin{equation}
E = \int \sqrt{-g^{(3)}}T_{00} ~d^3x = 4\pi \int_{r_h}^{+ \infty}
e^{\frac{1}{2}B + A+C} \left( e^{-B} g_{i\bar{j}} \phi^i{'}
{\bar{\phi}^{\bar{j}} }{'} + e^{-2C} V_{\mathrm{BH}} +  V \right)
dr \ . \label{energysol}
\end{equation}
Defining $J_L \equiv [r_h, L]$ and $J_A \equiv [L, +\infty)$ for
finite and large $L > r_h$. Suppose all functions $A(r)$, $B(r)$,
and $C(r)$ and all scalar fields $\phi(r)$ together with
potentials $(V, V_{\mathrm{BH}})$ are at least $C^2$-function. On
$J_A$ we have $A(r) = - B(r) = {\mathrm{ln}}\Lambda (r)$, and
$C(r) = 2 \ {\mathrm{ln}} r$ where $\Lambda (r)$ is given in
(\ref{Ar}). Moreover, the scalar fields $\phi(r)$ tend to have the
form (\ref{solscalarEOM2}) and the potentials $(V,
V_{\mathrm{BH}})$ become fixed, namely $(V_0, V^0_{\mathrm{BH}})$.
It turns out that in order to have finite energy, on $J_A$ the
scalar potential $V$ should be vanished. So, the energy
(\ref{energysol}) has to be
\begin{equation}
E \le  4\pi \sup_{r\in J_L}|\int_{r_h}^L e^{\frac{1}{2}B + A+C}
\left( e^{-B} g_{i\bar{j}} \phi^i{'} {\bar{\phi}^{\bar{j}} }{'} +
e^{-2C} V_{\mathrm{BH}} +  V \right) dr | + C_0(L, \Sigma,
\phi_0, \bar{\Sigma},  \bar{\phi}_0) \ . \label{energysol1}
\end{equation}
for $C_0 > 0$ and $\Sigma^i$ are the scalar charges. The first
term in the right hand side of (\ref{energysol1}) is also bounded
due to the $C^2$-smoothness of all functions, fields, and the
potentials. This shows that $E$ is
bounded above a positive constant.\\
\indent Therefore, we have proven,
\begin{theorem}
Let ${\mathcal{M}}^{2n_c}$ be a K\"ahler manifold spanned by scalar fields with potential K\"ahler $K$.
 Let $V$ and $V_{BH}$ are scalar and black hole potentials defined by equations (\ref{ScalarPotential}) and (\ref{VBH}), respectively.
 If ${\mathcal{M}}^{2n_c}$ satisfies lemma \ref{LemmaSIM} and $V$ and $V_{BH}$ satisfy condition (\ref{potentialCondition}),
 then for each initial value $\mathbf{u}_{0}$, there is positive constant $\delta$ such that the differential equation (\ref{EoMFinal})
  admits a unique local solution on $[r_{h},r_{h}+\delta]$.
   In particular, this local solution interpolates between two region, namely horizon and asymptotic regions,
    if the scalar potential $V$ vanishes in the asymptotic region.
\end{theorem}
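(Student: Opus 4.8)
The strategy is to collect the three lemmas established above and combine them with the energy estimate of Section~\ref{sec:energy}. First I would pass from the scalar equation of motion in first-order form (\ref{EoMFinal}) to the equivalent integral equation (\ref{IntegralEquation}), so that a solution on $J=[r_h,r_h+\delta]$ with prescribed data $\mathbf{u}(r_h)=\mathbf{u}_0$ is exactly a fixed point of the operator $\mathcal{K}$ of (\ref{OpKdefinition}). Since $\mathcal{M}^{2n_c}$ satisfies Lemma~\ref{LemmaSIM}, the K\"ahler potential and the Christoffel symbol obey the polynomial bounds in $|\phi|$ stated there; since $V$ and $V_{BH}$ are $C^{2}$ and satisfy (\ref{potentialCondition}), the preceding two lemmas apply without change, giving that $\mathcal{J}$ is locally Lipschitz in $\mathbf{u}$ on $U$ and that, for $\delta$ as in (\ref{TimeSugra}), $\mathcal{K}$ maps the complete metric space $X$ into itself and is a contraction. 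The contraction mapping theorem then yields a unique $\mathbf{u}\in X$, hence a unique local solution of (\ref{EoMFinal}) on $[r_h,r_h+\delta]$; this proves the first assertion.

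For the interpolation claim I would start from the near-horizon data $(\phi_h,\bar{\phi}_h,0)$ selected by (\ref{extremcon2})--(\ref{solscalarEOM3}) and continue the local solution forward, re-centering the interval and updating the a priori bound $M=\sup|\mathbf{u}|$ at each step, so as to obtain a maximal solution on some $[r_h,r_{\max})$. To force $r_{\max}=+\infty$ and to recover the asymptotic profile (\ref{solscalarEOM2}) with frozen scalars satisfying the critical-point conditions (\ref{extremcon1}), I would invoke the finite-energy analysis: on $J_A=[L,+\infty)$ the metric functions are $A=-B=\ln\Lambda(r)$, $C=2\ln r$ with $\Lambda$ given by (\ref{Ar}), so the energy integrand in (\ref{energysol}) contains the term $\Lambda^{1/2}r^2 V$, which grows like $r^{3}$ for large $r$ unless $V_0=0$; hence finiteness of $E$ forces $V$ to vanish asymptotically, and under that hypothesis the remaining contributions on $J_A$ are of order $r^{-2}$ and integrable, giving the bound (\ref{energysol1}) with the constant $C_0(L,\Sigma,\phi_0,\bar{\Sigma},\bar{\phi}_0)$. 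The $C^{2}$-smoothness of all functions, fields and potentials bounds the remaining contribution over the compact interval $J_L$, so $E$ is finite and bounded above by a positive constant; combined with the equation of motion this rules out escape of $(\phi,\pi)$ to the boundary of phase space on $J_A$, so the maximal solution extends to all of $J_\infty=[r_h,+\infty)$ and realizes the interpolation between the horizon and asymptotic regions.

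The step I expect to be the genuine obstacle is the global continuation rather than the local existence: the local theorem only gives a short interval whose length $\delta$ in (\ref{TimeSugra}) degenerates as the a priori bound $M$ grows, so extending to $[r_h,+\infty)$ requires controlling $|\mathbf{u}(r)|=|(\phi,\pi)|$ uniformly. The energy $E$ controls only a weighted $L^{2}$ quantity built from $\pi=\phi{'}$, namely $\int e^{-\frac12 B + A + C} g_{i\bar{j}}\phi^i{'}\bar{\phi}^{\bar{j}}{'}\,dr$, so converting this into a pointwise bound on $\pi$, and in addition a bound on $|\phi|$ itself, relies on the explicit form of $\Lambda(r)$ and $C(r)$ on $J_A$ together with the decay $\phi^i{'}\sim\Sigma^i/r^{2}$ of (\ref{solscalarEOM2}); this is precisely where the vanishing of $V$ in the asymptotic region is used, since the $-\tfrac13 V_0 r^{2}$ term in $\Lambda$ would otherwise make the energy integrand grow like $r^{3}$. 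Once these a priori bounds are secured, patching the local solutions into a single solution on $J_\infty$ is routine.
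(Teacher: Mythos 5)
Your proposal follows essentially the same route as the paper: local existence and uniqueness via the integral equation (\ref{IntegralEquation}) and the contraction-mapping lemmas of Section \ref{sec:exist}, and the interpolation claim via the finite-energy analysis of Section \ref{sec:energy}, where the term $\Lambda^{1/2}r^{2}V$ in the energy integrand forces $V$ to vanish asymptotically. The continuation issue you flag in your last paragraph --- converting the weighted $L^{2}$ energy bound into pointwise a priori control of $(\phi,\pi)$ so that the local intervals can be patched --- is a real gap, but it is a gap in the paper as well, whose argument for the global statement consists of exactly the energy estimate (\ref{energysol1}) that you reproduce; so your proof matches the paper's in both substance and level of rigor.
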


\section*{Acknowledgments}

The work in this paper is supported by Riset KK ITB 2014-2015 and Riset Desentralisasi DIKTI-ITB 2014-2015.


\begin{thebibliography}{99}


\bibitem{Gunara:2013IJMPA}
 B. Gunara, F. Zen, F. Akbar, A. Suroso and Arianto,
  \textit{Some Aspects of Spherical Symmetric Extremal Dyonic Black Holes in $4d$ $N = 1$ Supergravity}, Int. J. Mod. Phys. A {\bf 28}  (2013)   1350084, [arXiv:hep-th/1012.0971].
 
\bibitem{Andrianopoli:2007}
L.~Andrianopoli, R.~D'Auria, S.~Ferrara and M.~Trigiante, 
  \textit{Black-hole attractors in N=1 supergravity}, 
 J. High Energy Phys.  {\bf 0707} (2007) 019 
 [arXiv:hep-th/0703178]. 
 
\bibitem{Ferrara:1995ih} 
  S.~Ferrara, R.~Kallosh and A.~Strominger,
  \textit{N=2 extremal black holes},
  Phys.\ Rev.\ D {\bf 52}, 5412 (1995)
  [hep-th/9508072].
  
\bibitem{Ferrara:1996prd}
S.~Ferrara and R.~Kallosh, 
 \textit{Supersymmetry and attractors}, 
 Phys.\ Rev.\ D {\bf 54} (1996) 1514 
 [hep-th/9602136]. 

 
\bibitem{D'Auria:2001kv}
  R.~D'Auria and S.~Ferrara,
  \textit{On fermion masses, gradient flows and potential in supersymmetric theories},  J. High Energy Phys. {\bf 0105} (2001) 034,
  [arXiv:hep-th/0103153] and the references therein.

\bibitem{Andrianopoli:2001zh}
  L.~Andrianopoli, R.~D'Auria and S.~Ferrara,
  \textit{Supersymmetry reduction of N-extended supergravities in four  dimensions},
  J. High Energy Phys. {\bf 0203} (2002) 025,
  [arXiv:hep-th/0110277] and the references therein.


\bibitem{Belluci:2008prd}
S. Belluci, S. Ferrara, A. Marrani, and A. Yeranyan,
\textit{d = 4 Black Hole Attractors in $N=2$ Supergravity with Fayet-Iliopoulos Terms},
Phys. Rev. D {\bf 77} (2008) 085027, [arXiv:hep-th/0802.0141].


\bibitem{Ferrara:1997npb}
S. Ferrara, G. W. Gibbons, and R. Kallosh,
\textit{Black holes and critical points in moduli space}, Nucl. Phys. B {\bf 500}
 (1997) 75, [arXiv:hep-th/9702103].

\bibitem{Gibbons:1996prl}
G.Gibbons, R. Kallosh, and B. Kol, \textit{Moduli, Scalar Charges,
and the First Law of Black Hole Thermodynamics}, Phys. Rev. Lett.
{\bf 77} (1996) 4992, [arXiv:hep-th/9607108].
\bibitem{KLR}
H. K. Kunduri, J, Lucietti, H. S. Reall, \textit{Near-horizon
symmetries of extremal black holes}, Class. Quant. Grav.  {\bf 24}
(2007) 4169 [arXiv:gr-qc/0705.4214].
\bibitem{AGTZ}
F. T. Akbar, B. E. Gunara, Triyanta, F. P. Zen, \textit{Bosonic Part of 4d N=1 Supersymmetric Gauge Theory with General Couplings: Local Existence}, Adv. Theor. Math. Phys. {\bf 18} (2014)  205, [arXiv:math-ph/1302.4212].

\end{thebibliography}
\end{document}